\newtheorem{theorem}{Theorem}
\newtheorem{lemma}[theorem]{Lemma}
\newtheorem{proposition}{Proposition}
\theoremstyle{remark}
\newcommand{\bbC}{\mathbb{C}}
\newcommand{\mch}{\mathcal{H}}
\newcommand{\mcs}{\mathcal{S}}
\begin{document}
\title[ ] {Electrostatic Equilibria on the Unit Circle via Jacobi Polynomials}

\bibliographystyle{plain}

\thanks{  }


\maketitle


\centerline{Kev Johnson $\&$ Brian Simanek}


\begin{abstract}
We use classical Jacobi polynomials to identify the equilibrium configurations of charged particles confined to the unit circle.  Our main result unifies two theorems from a 1986 paper of Forrester and Rogers.
\end{abstract}


\section{Introduction}\label{intro}

The use of Jacobi polynomials to describe configurations of charged particles that are in electrostatic equilibrium goes back at least to the work of Heine and Stieltjes in the 19th century (see \cite{Heine,Stieltjes,Stieltjes1,Stieltjes2}).  Their work considered particles of identical charge confined to an interval in the real line.  The key to the calculations is to relate the condition of being a critical point of the appropriate Hamiltonian to the second order differential equation satisfied by the polynomial whose zeros mark the equilibrium points (see \cite{Szeg}).  In the case of $n$ particles confined to an interval with charged particles fixed at the endpoints, the relevant second order differential equation is precisely the ODE satisfied by the degree $n$ Jacobi polynomial $P_n^{(\alpha,\beta)}(x)$, namely
\begin{equation}\label{jacode}
(1-x^2)y''+(\beta-\alpha-(\alpha+\beta+2)x)y'+n(n+\alpha+\beta+1)y=0,
\end{equation}
where the real numbers $\alpha$ and $\beta$ are related to the magnitude of the fixed charges at the endpoints of the interval.  Many variations and generalizations of Stieltjes' work have been realized since his original papers (see for example \cite{Ismail,MMFMG}).

It was approximately 100 years before the work of Heine and Stieltjes was adapted to the setting of the unit circle by Forrester and Rogers in \cite{FR}.  In that paper, the authors studied highly symmetric configurations of charged particles that are on the unit circle and in electrostatic equilibrium, meaning the total force on each particle is normal to the circle at its location.  They described the equilibrium configurations in terms of the zeros of the appropriate Jacobi polynomials.  Our main result (Theorem \ref{better41} below) will generalize the results from that paper by allowing for a broader collection of configurations and charges.

Since we will be working with the two-dimensional electrostatic interaction, we will consider Hamiltonians of the form
\begin{equation}\label{genh}
H(\{ t_j \}_{j=1}^M)=\sum_{1\leq j<k \leq M}\sigma(e^{it_j})\sigma(e^{it_k})\log |e^{it_j}-e^{it_k}|+\sum_{b=1}^K\sum_{a=1}^M\sigma(e^{i\eta_b})\sigma(e^{it_a})\log |e^{i\eta_b}-e^{it_a}|.
\end{equation}
as in \cite{FR,Grinshpan}, where the particles at the points $\{e^{it_j}\}_{j=1}^M$ are considered ``mobile," the particles at $\{e^{i\eta_j}\}_{j=1}^K$ are considered ``fixed," and $\sigma(x)>0$ denotes the charge carried by the particle located at $x\in\bbC$.  To avoid any ambiguity that may arise from rotating the circle, we will always assume $K\geq1$.

In our main result, we will consider the configuration space that consists of all $\{e^{i\phi_j}\}_{j=1}^m$, $\{e^{i\psi_j}\}_{j=1}^m$, $\{e^{i\theta_j}\}_{j=1}^{2mn}$ such that
\begin{align*}
&0=\phi_1\\
&\phi_j<\theta_{2(j-1)n+1}<\theta_{2(j-1)n+2}<\cdots<\theta_{(2j-1)n}<\psi_j,\qquad\qquad j=1,2,\ldots,m\\
&\psi_j<\theta_{(2j-1)n+1}<\theta_{(2j-1)n+2}<\cdots<\theta_{2jn}<\phi_{j+1},\qquad\qquad\,\,\, j=1,2,\ldots,m,
\end{align*}
where $\phi_{m+1}=2\pi$.  We will denote this configuration space by $\mcs$ and note that $\mcs$ is convex.  Let us suppose that $p,q>0$ are fixed.  On $\mcs$ we consider the Hamiltonian $\tilde{H}$ given by
\begin{align*}
\tilde{H}\left(\{\phi_j\}_{j=1}^m,\{\psi_j\}_{j=1}^m,\{\theta_j\}_{j=1}^{2mn}\right)&=p\sum_{j=1}^{m}\sum_{k=1}^{2mn}\log |e^{i\phi_j}-e^{i \theta_k}|+q\sum_{j=1}^{m}\sum_{k=1}^{2mn}\log |e^{i\psi_j}-e^{i \theta_k}|\\
&+\sum_{1 \leq k<j\leq 2mn}\log |e^{i\theta_j}-e^{i\theta_k}|+p^2\sum_{1\leq k<j\leq m}\log |e^{i\phi_j}-e^{i \phi_k}|\\
&+q^2\sum_{1 \leq k<j\leq m}\log |e^{i\psi_j}-e^{i\psi_k}|+pq\sum_{1\leq k,j\leq m}\log |e^{i\phi_j}-e^{i \psi_k}|
\end{align*}
Notice that since $\phi_1=0$ always, we can think of $\tilde{H}$ as being a function of $2mn+2m-1$ real variables.  This Hamiltonian is of the form $H$ from \eqref{genh} with $K=1$ and $e^{i\eta_1}=1$, $M=2mn+2m-1$, $\sigma(e^{i\phi_j})=p$ and $\sigma(e^{i\psi_j})=q$ for all $j=1,\ldots,m$, and $\sigma(e^{i\theta_j})=1$ for all $j=1,\ldots,2mn$.  The following result is a generalization of \cite[Theorem 2.1]{FR} and \cite[Theorem 4.1]{FR}.

\begin{theorem}\label{better41}
The Hamiltonian $\tilde{H}$ attains its maximum on $\mcs$ precisely when the points $\{e^{i\phi_j}\}_{j=1}^m$ mark the $m^{th}$ roots of $1$, the points $\{e^{i\psi_j}\}_{j=1}^m$ mark the $m^{th}$ roots of $-1$, and the points $\{e^{i\theta_j}\}_{j=1}^{2mn}$ mark the zeros of the polynomial
\begin{equation}\label{polydef}
z^{mn}P_n^{(p-1/2,q-1/2)}\left(\frac{z^m}{2}+\frac{1}{2z^m}\right)
\end{equation}
\end{theorem}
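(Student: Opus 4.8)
The plan is to split the problem into two independent parts: (i) showing that $\tilde H$ is \emph{strictly concave} on the convex set $\mcs$, which forces it to have at most one critical point in $\mcs$ and makes that critical point, if it exists, the unique global maximizer; and (ii) exhibiting the configuration of the statement as a critical point. For (i), I would use $|e^{it}-e^{is}|=2\bigl|\sin\tfrac{t-s}{2}\bigr|$ to rewrite each summand of $\tilde H$ as a positive constant times $u\mapsto\log|\sin(u/2)|$ precomposed with an affine function of the $2mn+2m-1$ coordinates whose range, over configurations in $\mcs$, stays within a single open interval of the form $(2\pi k,2\pi(k+1))$; since $u\mapsto\log|\sin(u/2)|$ has strictly negative second derivative there, $\tilde H$ is concave. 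To upgrade this to strict concavity I would observe that the Hessian of $\tilde H$ is a sum of negative scalar multiples of rank-one positive semidefinite matrices, and that the summands recording the interaction of each mobile charge with the fixed charge at $e^{i\phi_1}=1$ already contribute the rank-one pieces supported, respectively, on the single coordinates $\theta_k$ ($1\le k\le 2mn$), $\phi_j$ ($2\le j\le m$), and $\psi_j$ ($1\le j\le m$); these directions span the whole coordinate space, so the Hessian is negative definite throughout $\mcs$.

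For (ii), I would translate the stationarity equations $\partial\tilde H/\partial\theta_k=0$, $\partial\tilde H/\partial\phi_j=0$ ($j\ge2$), and $\partial\tilde H/\partial\psi_j=0$ into statements about the polynomials $A(z)=\prod_{j}(z-e^{i\phi_j})$, $B(z)=\prod_{j}(z-e^{i\psi_j})$, and $T(z)=\prod_{k}(z-e^{i\theta_k})$, using $\frac{\partial}{\partial t}\log|e^{it}-e^{is}|=\tfrac12\cot\tfrac{t-s}{2}$, the identity $\tfrac12\cot\tfrac{t-s}{2}=-\Imag\bigl(e^{it}/(e^{it}-e^{is})\bigr)$, and $\sum_{\ell\ne k}(z_k-z_\ell)^{-1}=T''(z_k)/\bigl(2T'(z_k)\bigr)$. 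Each equation then becomes the vanishing of the imaginary part of an explicit rational expression in $A$, $B$, $T$ and their first and second derivatives, evaluated at the relevant point of the unit circle. For the configuration of the statement one has $A(z)=z^{m}-1$, $B(z)=z^{m}+1$, and $T(z)$ a constant multiple of the polynomial \eqref{polydef}; one also checks routinely that this configuration does lie in $\mcs$ (the $2mn$ zeros of \eqref{polydef} are simple, lie on the unit circle, and interleave with the roots of $\pm1$ in the prescribed cyclic order).

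The equations at the points $e^{i\phi_j}$ and $e^{i\psi_j}$ are then immediate: since $(e^{i\phi_j})^{m}=1$ and $(e^{i\psi_j})^{m}=-1$, and since $z^{m}-z^{-m}$ vanishes at these points, all the logarithmic-derivative quantities occurring there evaluate to real numbers, so their imaginary parts vanish automatically. The substantive case is the equation at $z_k=e^{i\theta_k}$. Writing $\zeta=z_k^{m}=e^{i\gamma_k}$, the construction of \eqref{polydef} guarantees that $w_k=\tfrac12(\zeta+\zeta^{-1})=\cos\gamma_k$ is a zero of $P_n^{(p-1/2,q-1/2)}$. Differentiating $z^{mn}P_n^{(p-1/2,q-1/2)}\bigl(\tfrac12(z^{m}+z^{-m})\bigr)$ twice, evaluating at $z_k$, dropping the terms killed by $P_n^{(p-1/2,q-1/2)}(w_k)=0$, and---this is the decisive step---invoking the Jacobi differential equation \eqref{jacode} to rewrite $P_n''(w_k)/P_n'(w_k)$, together with the identity $(z_k^{m}-z_k^{-m})^{2}=4(w_k^{2}-1)$, collapses $z_k T''(z_k)/T'(z_k)$ into an explicit function of $w_k$ and $\gamma_k$. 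Combining this with the elementary values of $z_kA'(z_k)/A(z_k)$ and $z_kB'(z_k)/B(z_k)$ and taking imaginary parts reduces the stationarity equation to the trigonometric identity $-p\cot\tfrac{\gamma_k}{2}+q\tan\tfrac{\gamma_k}{2}+\bigl((p+q)\cos\gamma_k+p-q\bigr)\big/\sin\gamma_k=0$, which is verified by clearing $\sin\gamma_k$ and applying the half-angle formulas for $\cot(\gamma_k/2)$ and $\tan(\gamma_k/2)$; the branch $\zeta=e^{-i\gamma_k}$ is handled identically via complex conjugation. Once parts (i) and (ii) are established, the configuration of the statement is the unique critical point of $\tilde H$ on $\mcs$ and hence its unique maximizer.

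I expect the last verification to be the main obstacle: pushing the generalized Joukowski substitution $z\mapsto\tfrac12(z^{m}+z^{-m})$ through the second-derivative computation for $T$, calling on \eqref{jacode} at exactly the right moment, and then handling branches and signs carefully enough that the resulting trigonometric expression really does collapse to $0$. By contrast, the concavity argument in (i) and the equations at $e^{i\phi_j}$ and $e^{i\psi_j}$ in (ii) are routine once this framework is in place.
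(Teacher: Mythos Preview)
Your proposal is correct and follows essentially the same strategy as the paper: establish strict concavity of $\tilde H$ on the convex set $\mcs$ (the paper does this via diagonal dominance of the Hessian, which is the same content as your rank-one decomposition), and then verify that the stated configuration is a critical point by invoking the Jacobi differential equation \eqref{jacode} for the $\theta$-equations. The only notable difference is organizational: the paper first isolates the $\theta$-only problem with the $p$- and $q$-charges pinned at the $2m$-th roots of unity (deriving an auxiliary ODE for $Q_{nm}$), and then treats the $\phi$- and $\psi$-equations by an explicit symmetry-plus-computation matching against Lemma~\ref{equiv}, whereas you handle the $\phi$- and $\psi$-equations in one stroke by observing that $z^{m}-z^{-m}$ vanishes there so that every term of $z\cdot(\text{force})$ is already real---a slightly slicker route to the same conclusion.
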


An example of a configuration described in Theorem \ref{better41} can be seen in Figure 1.  The special case of Theorem \ref{better41} in which $m=1$ is precisely \cite[Theorem 2.1]{FR}.  The special case of Theorem \ref{better41} in which $p=q$ and $m$ is a power of $2$ is precisely \cite[Theorem 4.1]{FR}.  The proof of Theorem \ref{better41} will require some intermediate steps where we consider related Hamiltonians, all of the form \eqref{genh} for an appropriate choice of parameters.  More precisely, we will proceed by using the ODE \eqref{jacode} to find a critical point of the Hamiltonian $\tilde{H}$, which we will deduce is the maximizer from a uniqueness result that we prove in Section \ref{uniquesec}.  To find the critical point, we will first consider the case in which the particles with charge $p$ and $q$ are fixed (see Section \ref{fixed}) and then use a symmetry argument to handle the general case in Section \ref{mobile}.

\begin{figure}
  \centering
  \includegraphics[width=0.45\textwidth]{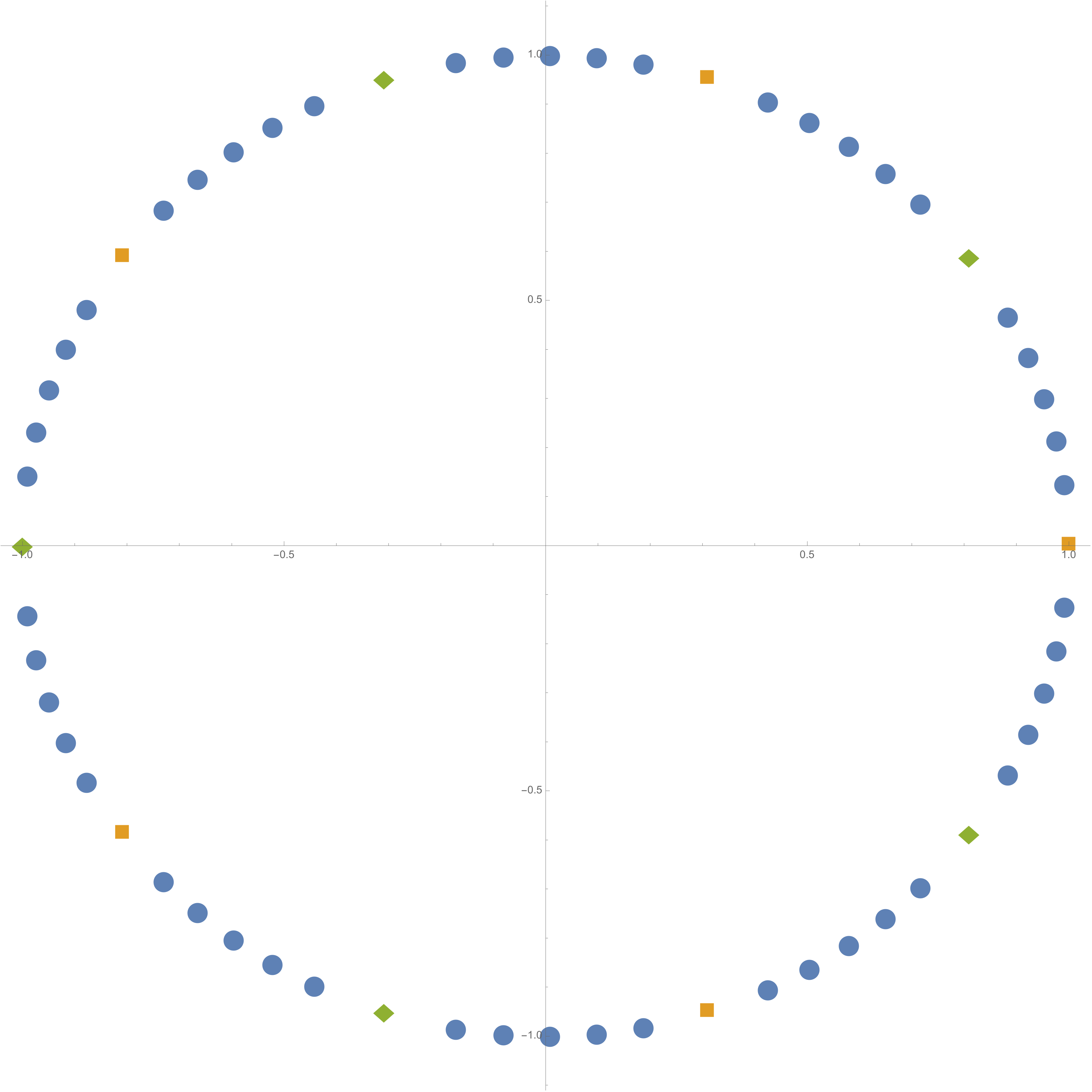}
\caption{The equilibrium configuration described by Theorem \ref{better41} when $n=5$, $m=5$, $p=2$, and $q=2.5$.  The squares mark the $5^{th}$ roots of unity, the diamonds mark the $5^{th}$ roots of $-1$, and the circles mark the roots of the polynomial \eqref{polydef}.}
\label{fig:6}       
\end{figure}

\section{Critical Points of $H$}\label{uniquesec}

Our main result of this section is a uniqueness result that applies to all Hamiltonians $H$ of the form \eqref{genh}.  We will apply it in several special cases in later sections.

\begin{theorem} \label{unique}
The Hamiltonian $H$ defined in \eqref{genh} has a unique critical point on each connected component of the domain on which $H$ is finite.
\end{theorem}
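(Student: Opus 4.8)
The plan is to show that on each connected component of the set where $H$ is finite, $H$ is bounded above, strictly concave, and tends to $-\infty$ at the boundary; existence and uniqueness of a critical point then both follow at once, and moreover the critical point is automatically the unique maximizer (which is what will be used in the later sections). The first step is to reduce a connected component to a convex set. Since $H$ is invariant under simultaneous rotation of all of $e^{it_1},\dots,e^{it_M}$ and $e^{i\eta_1},\dots,e^{i\eta_K}$, and since $K\geq1$, we may rotate so that $\eta_1=0$. On a connected component $C$ of the finiteness domain no mobile particle ever collides with $e^{i\eta_1}=1$, so each $t_a$ has a well-defined representative in the open interval $(0,2\pi)$, and as one moves within $C$ the cyclic order of $e^{it_1},\dots,e^{it_M},e^{i\eta_2},\dots,e^{i\eta_K}$ is constant. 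In these coordinates $C$ is the subset of the cube $(0,2\pi)^M$ carved out by a fixed family of strict linear inequalities of the forms $t_j<t_k$, $t_a<\eta_b$, and $\eta_b<t_a$; in particular $C$ is a bounded, open, convex set. (This is exactly the point at which the standing hypothesis $K\geq1$ is used: a fixed charge pins the circle so that these cells are convex rather than rotationally symmetric.)

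The second step is to check strict concavity on $C$. Writing $\log|e^{is}-e^{it}|=g(s-t)$ with $g(u)=\log\bigl(2\,|\sin(u/2)|\bigr)$, a direct computation gives $g''(u)=-\tfrac14\csc^{2}(u/2)<0$ for $u\notin 2\pi\bbZ$, and $g$ is $2\pi$-periodic. Differentiating \eqref{genh} twice yields
\begin{multline*}
\mathrm{Hess}\,H=\sum_{1\le j<k\le M}\sigma(e^{it_j})\sigma(e^{it_k})\,g''(t_j-t_k)\,(\mathbf e_j-\mathbf e_k)(\mathbf e_j-\mathbf e_k)^{T}\\
+\sum_{b=1}^{K}\sum_{a=1}^{M}\sigma(e^{i\eta_b})\sigma(e^{it_a})\,g''(t_a-\eta_b)\,\mathbf e_a\mathbf e_a^{T},
\end{multline*}
where $\mathbf e_1,\dots,\mathbf e_M$ is the standard basis of $\bbR^{M}$. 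On $C$ all arguments of $g''$ lie in $(-2\pi,2\pi)\setminus\{0\}$, so, since every charge is positive, each summand is negative semidefinite. Moreover the $b=1$ terms of the second sum, namely $\sum_{a}\sigma(e^{i\eta_1})\sigma(e^{it_a})\,g''(t_a)\,\mathbf e_a\mathbf e_a^{T}$, already form a negative definite matrix; hence $\mathrm{Hess}\,H$ is negative definite throughout $C$, so $H$ is strictly concave there.

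The third step is existence. The set $C$ is bounded, and $H\to-\infty$ as one approaches any boundary point of $C$: at such a point either two of the $e^{it_j}$ coincide, or some $e^{it_a}$ meets some $e^{i\eta_b}$ (the case $b=1$ being $t_a\to0$ or $t_a\to2\pi$), and in each case one of the logarithms in \eqref{genh} diverges to $-\infty$ while the remaining terms stay bounded. Hence the continuous function $H$ attains a maximum at an interior point of $C$, which is a critical point; strict concavity on the convex set $C$ forces this critical point to be unique, and every critical point of a strictly concave function is its global maximizer, so there are no others. I expect the first step — verifying that a connected component of the finiteness domain is genuinely convex in suitable coordinates — to be the part that most needs care, and it is precisely the step where $K\geq1$ is essential, since without a fixed charge the rotational symmetry would make both the convexity and the uniqueness fail.
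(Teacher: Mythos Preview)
Your proof is correct and follows the same strategy as the paper: compute the Hessian, show it is negative definite so that $H$ is strictly concave on each (convex) connected component, and combine this with the blow-up of $H$ at the boundary to obtain existence and uniqueness of the critical point. The differences are cosmetic: the paper establishes negative definiteness via diagonal dominance of $-\mathrm{Hess}\,H$ rather than your rank-one decomposition, and it cites a reference for the convexity of the components instead of arguing it directly as you do (your explicit use of $K\ge1$ to pin coordinates and realize each component as a convex cell in $(0,2\pi)^M$ is a clean addition).
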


\begin{proof}
We follow the method used to prove similar statements in \cite{Ismail,MMFMG}.  Define the Hessian $\mch$ of $H$ to be
\[
\mathcal{H}_{jk}=\dfrac{\partial ^2H}{\partial t_j \partial t_k}.
\]
We will first show that $-\mathcal{H}$ is strictly positive definite, which will then imply that $H$ is strictly concave on each connected component of its domain (using the fact that every such connected component is a convex set; see \cite[Theorem 1.5]{Convex}). To this end, we calculate the partial derivatives

\begin{align*}
\dfrac{\partial H}{\partial t_k}&=\sum_{\stackrel{j=1}{j \neq k}}^M\frac{\sigma(e^{it_j})\sigma(e^{it_k})}{2}\cot \left( \dfrac{t_k-t_j}{2} \right)+\sum_{b=1}^K\frac{\sigma(e^{i\eta_b})\sigma(e^{it_k})}{2}\cot \left( \dfrac{t_k-\eta_b}{2} \right)\\
\dfrac{\partial^2 H}{\partial t_j \partial t_k}&=\dfrac{\sigma(e^{it_j})\sigma(e^{it_k})}{4}\csc ^2 \left( \dfrac{t_k-t_j}{2} \right),\qquad\qquad\qquad\qquad\qquad\qquad j\neq k\\
\dfrac{\partial^2 H}{\partial t_k^2}&=-\sum_{\stackrel{j=1}{j \neq k}}^M\dfrac{\sigma(e^{it_j})\sigma(e^{it_k})}{4}\csc ^2 \left( \dfrac{t_k-t_j}{2} \right)-\sum_{b=1}^K\frac{\sigma(e^{i\eta_b})\sigma(e^{it_k})}{4}\csc ^2 \left( \dfrac{t_k-\eta_b}{2} \right).
\end{align*}

Observe that the negative of each diagonal entry is precisely the sum of the off diagonal entries of the same row plus a positive term.  It follows that $-\mathcal{H}$ is diagonally dominant and has only positive eigenvalues and is therefore strictly positive definite.  It follows that $H$ is strictly concave on each connected component of its domain.

Notice that $H(\textbf{x})$ approaches $-\infty$ as $\textbf{x}$ approaches the boundary of a connected component of the domain of $H$.  It follows from the upper semicontinuity of $H$ that $H$ attains a maximum on every such connected component and therefore must have a critical point on every such connected component.  Uniqueness of this critical point follows from the strict concavity just proven.
\end{proof}

Recall the convention that if a particle of charge $q$ is located at a point $a\in\bbC$ and a particle of charge $p$ is located at a point $b\in\bbC$, then the force on the particle at $b$ due to the particle at $a$ is $2pq/(\bar{b}-\bar{a})$ (as in \cite{Grinshpan,Grinshpan2,ODEPOPUC}).  With this convention, we have the following lemma relating critical points of general Hamiltonians of the form \eqref{genh} to the condition of electrostatic equilibrium (see also \cite{Grinshpan}).

\begin{lemma}\label{equiv}
For the Hamiltonian $H$ from \eqref{genh}, it holds that
\[
\frac{\partial H}{\partial t_k}(\{t_j^*\}_{j=1}^{M})=0
\]
if and only if
\begin{equation}\label{lemeq}
\sum_{{j=1}\atop{j\neq k}}^{M}\frac{\sigma(e^{it_j^*})\sigma(e^{it_k^*})}{e^{it_k^*}-e^{it_j^*}}+\sum_{j=1}^{K}\frac{\sigma(e^{i\eta_j})\sigma(e^{it_k^*})}{e^{it_k^*}-e^{i\eta_j}}=e^{-it_k^*}\sigma(e^{it_k^*})\left(\sum_{\stackrel{j=1}{j \neq k}}^M\frac{\sigma(e^{it_j^*})}{2}+\sum_{b=1}^K\frac{\sigma(e^{i\eta_b})}{2}\right).
\end{equation}
\end{lemma}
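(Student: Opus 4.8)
The plan is to compute $\partial H/\partial t_k$ explicitly using the known derivative formula from the proof of Theorem \ref{unique}, and then algebraically transform the trigonometric identity $\sum \frac{\sigma_j\sigma_k}{2}\cot\left(\frac{t_k-t_j}{2}\right)=0$ into the complex-exponential form \eqref{lemeq}. The essential ingredient is the elementary identity
\begin{equation*}
\frac{1}{2}\cot\left(\frac{t_k-t_j}{2}\right)=\frac{1}{e^{it_k}-e^{it_j}}\cdot e^{it_k}-\frac{1}{2},
\end{equation*}
which one verifies by writing $\cot(\alpha/2)=i\frac{e^{i\alpha/2}+e^{-i\alpha/2}}{e^{i\alpha/2}-e^{-i\alpha/2}}$ with $\alpha=t_k-t_j$, multiplying numerator and denominator by $e^{i(t_k+t_j)/2}$ (respectively clearing to $e^{it_k}$), and simplifying; equivalently, $\frac{e^{it_k}}{e^{it_k}-e^{it_j}}=\frac{1}{1-e^{i(t_j-t_k)}}=\frac12+\frac{i}{2}\cot\left(\frac{t_k-t_j}{2}\right)$, which is the standard partial-fractions-on-the-circle formula. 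The same identity applies to the fixed-charge terms with $t_j$ replaced by $\eta_b$.

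With this substitution in hand, the argument is a direct manipulation. First I would recall that, from the computation already carried out in the proof of Theorem \ref{unique},
\begin{equation*}
\frac{\partial H}{\partial t_k}=\sum_{\stackrel{j=1}{j\neq k}}^{M}\frac{\sigma(e^{it_j})\sigma(e^{it_k})}{2}\cot\left(\frac{t_k-t_j}{2}\right)+\sum_{b=1}^{K}\frac{\sigma(e^{i\eta_b})\sigma(e^{it_k})}{2}\cot\left(\frac{t_k-\eta_b}{2}\right).
\end{equation*}
Evaluating at a point $\{t_j^*\}$ and applying the identity above to each cotangent, each term $\frac{\sigma_j^*\sigma_k^*}{2}\cot\left(\frac{t_k^*-t_j^*}{2}\right)$ becomes $e^{it_k^*}\cdot\frac{\sigma_j^*\sigma_k^*}{e^{it_k^*}-e^{it_j^*}}-\frac{\sigma_j^*\sigma_k^*}{2}$, and similarly for the $\eta_b$ terms. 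Summing, the condition $\partial H/\partial t_k=0$ reads
\begin{equation*}
e^{it_k^*}\left(\sum_{\stackrel{j=1}{j\neq k}}^{M}\frac{\sigma(e^{it_j^*})\sigma(e^{it_k^*})}{e^{it_k^*}-e^{it_j^*}}+\sum_{b=1}^{K}\frac{\sigma(e^{i\eta_b})\sigma(e^{it_k^*})}{e^{it_k^*}-e^{i\eta_b}}\right)=\sigma(e^{it_k^*})\left(\sum_{\stackrel{j=1}{j\neq k}}^{M}\frac{\sigma(e^{it_j^*})}{2}+\sum_{b=1}^{K}\frac{\sigma(e^{i\eta_b})}{2}\right).
\end{equation*}
Multiplying both sides by $e^{-it_k^*}$ yields exactly \eqref{lemeq}, and each step is reversible, giving the ``if and only if.''

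There is no serious obstacle here; the only thing requiring care is the verification of the cotangent identity, which is a one-line complex-arithmetic check (and where one should note $e^{it_k}-e^{it_j}\neq0$ on the domain where $H$ is finite, so all denominators are legitimate). I would present the identity as a short displayed computation, then state that substituting it termwise and multiplying through by $e^{-it_k^*}$ produces \eqref{lemeq}, and conclude by observing that no step used any inequality, so the equivalence holds in both directions.
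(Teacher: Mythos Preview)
Your approach is exactly the paper's: recall the cotangent form of $\partial H/\partial t_k$ from the Hessian computation, convert each term via the identity linking $\cot\bigl((t_k-t_j)/2\bigr)$ to $e^{it_k}/(e^{it_k}-e^{it_j})$, and read off \eqref{lemeq}. One small slip to fix in your displayed identity: in fact $\dfrac{e^{it_k}}{e^{it_k}-e^{it_j}}-\dfrac12=-\dfrac{i}{2}\cot\!\left(\dfrac{t_k-t_j}{2}\right)$, so a factor of $i$ is missing (the paper's own ``rewrite'' has the same imprecision), but this is harmless since you only use the identity to pass between the vanishing of one side and the vanishing of the other.
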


\begin{proof}
We have already seen that
\[
\dfrac{\partial H}{\partial t_k}=\sum_{\stackrel{j=1}{j \neq k}}^M\frac{\sigma(e^{it_j})\sigma(e^{it_k})}{2}\cot \left( \dfrac{t_k-t_j}{2} \right)+\sum_{b=1}^K\frac{\sigma(e^{i\eta_b})\sigma(e^{it_k})}{2}\cot \left( \dfrac{t_k-\eta_b}{2} \right)
\]
We can rewrite this as
\[
\dfrac{\partial H}{\partial t_k}=\sum_{\stackrel{j=1}{j \neq k}}^M\frac{\sigma(e^{it_j})\sigma(e^{it_k})e^{it_k}}{e^{it_k}-e^{it_j}}-\sum_{\stackrel{j=1}{j \neq k}}^M\frac{\sigma(e^{it_j})\sigma(e^{it_k})}{2}+\sum_{b=1}^K\frac{\sigma(e^{i\eta_b})\sigma(e^{it_k})e^{it_k}}{e^{it_k}-e^{i\eta_b}}-\sum_{b=1}^K\frac{\sigma(e^{i\eta_b})\sigma(e^{it_k})}{2}
\]
and the desired result follows.
\end{proof}

It follows from Lemma \ref{equiv} that $\{t_j^*\}_{j=1}^M$ is a critical point of $H$ if and only if we have equality in \eqref{lemeq} for all $k=1,2,\ldots,M$.  For future reference, notice that the expression
\begin{equation}\label{csum}
\sum_{\stackrel{j=1}{j \neq k}}^M\frac{\sigma(e^{it_j^*})}{2}+\sum_{b=1}^K\frac{\sigma(e^{i\eta_b})}{2}
\end{equation}
on the right-hand side of \eqref{lemeq} is one half of the sum of the charges on all of the particles in the system except the one at $e^{it_k^*}$.

\section{$p$ and $q$ Charges Fixed}\label{fixed}

In this section, we will take a preliminary step towards the proof of Theorem \ref{better41} and consider the Hamiltonian $\hat{H}$ given by
\begin{align*}
&\hat{H}\left(\{\theta_j\}_{j=1}^{2mn}\right)\\
&\qquad=p\sum_{j=1}^{m}\sum_{k=1}^{2mn}\log |e^{2\pi ij/m}-e^{i \theta_k}|+q\sum_{j=1}^{m}\sum_{k=1}^{2mn}\log |e^{(2j+1)i\pi/m}-e^{i \theta_k}|+\sum_{1 \leq k<j\leq 2mn}\log |e^{i\theta_j}-e^{i\theta_k}|
\end{align*}
The Hamiltonian $\hat{H}$ isolates the $\theta$-dependence of the Hamiltonian $\tilde{H}$ by fixing the locations of the points $\{e^{i\phi_j}\}_{j=1}^m$ and $\{e^{i\psi_j}\}_{j=1}^m$ at the $m^{th}$ roots of $1$ and $-1$ respectively.  Let us also define the configuration space $\hat{\mcs}$ to be the set of all $\{\theta_j\}_{j=1}^{2mn}$ satisfying
\begin{align*}
\theta_j&<\theta_{j+1}\qquad\qquad\qquad j=1,2,\ldots,2mn-1,\\
\frac{(k-1)\pi}{m}&<\theta_j<\frac{k\pi}{m},\qquad\qquad\mbox{if}\qquad (k-1)n<j\leq kn.
\end{align*}
Observe that $\hat{\mcs}$ is convex.  In this context, we have the following result.

\begin{proposition}\label{better41a}
The unique configuration that maximizes $\hat{H}$ on $\hat{\mcs}$ occurs when the points $\{e^{i\theta_j}\}_{j=1}^{2mn}$ mark the zeros of the polynomial in \eqref{polydef}.
\end{proposition}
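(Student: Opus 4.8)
The plan is to locate the unique critical point of $\hat H$ on $\hat{\mcs}$ and to show it is the configuration of zeros of \eqref{polydef}; Theorem \ref{unique} then finishes the job. First I would note that $\hat H$ is finite exactly on the set where the points $e^{i\theta_j}$ are pairwise distinct and distinct from every $m^{th}$ root of $\pm 1$, that $\hat{\mcs}$ is an open convex --- hence connected --- subset of this set, and that every term $\log|e^{i\theta_j}-e^{i\theta_k}|$ is bounded above by $\log 2$. Consequently, as a configuration approaches $\partial\hat{\mcs}$ at least one logarithm diverges to $-\infty$ while the remaining terms stay bounded above, so the upper semicontinuous $\hat H$ attains its supremum over $\hat{\mcs}$ at an interior point, which is a critical point of $\hat H$. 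Since $\hat{\mcs}$ is connected it lies in a single connected component of the finiteness domain, and Theorem \ref{unique} guarantees $\hat H$ has only one critical point in that component. Hence it is enough to exhibit one critical point of $\hat H$ inside $\hat{\mcs}$ and to identify it as the zero set of \eqref{polydef}.

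Next I would check that the zeros of $Q(z):=z^{mn}P_n^{(p-1/2,q-1/2)}\!\bigl(\tfrac{z^m}{2}+\tfrac{1}{2z^m}\bigr)$ form a point of $\hat{\mcs}$. Since $p,q>0$, the Jacobi parameters exceed $-1$ and $P_n^{(p-1/2,q-1/2)}$ has $n$ simple zeros in $(-1,1)$. One verifies that $Q$ is a polynomial of degree $2mn$, and that for $z=e^{i\theta}$ one has $Q(e^{i\theta})=0$ iff $P_n^{(p-1/2,q-1/2)}(\cos m\theta)=0$. Because $\cos$ maps each interval $\bigl((k-1)\pi,k\pi\bigr)$ bijectively and monotonically onto $(-1,1)$, each of the $2m$ arcs $\bigl(\tfrac{(k-1)\pi}{m},\tfrac{k\pi}{m}\bigr)$, $k=1,\dots,2m$, contains exactly $n$ zeros of $Q$; these are $2mn$ distinct points on the unit circle, hence all of the zeros, and listing them in increasing order within each arc produces an element of $\hat{\mcs}$.

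It remains to show this configuration is a critical point of $\hat H$. By Lemma \ref{equiv}, with $K=2m$ fixed charges ($p$ at each $m^{th}$ root of $1$, $q$ at each $m^{th}$ root of $-1$) and unit mobile charges, the zeros $z_k=e^{i\theta_k}$ of $Q$ form a critical point of $\hat H$ if and only if for each $k$
\begin{equation*}
\sum_{j\neq k}\frac{1}{z_k-z_j}+\frac{pm\,z_k^{m-1}}{z_k^{m}-1}+\frac{qm\,z_k^{m-1}}{z_k^{m}+1}=\frac{2mn-1+mp+mq}{2z_k},
\end{equation*}
where the two fixed sums were evaluated using $\prod_{l=0}^{m-1}(z-e^{2\pi il/m})=z^m-1$ and $\prod_{l=0}^{m-1}(z-e^{i\pi(2l+1)/m})=z^m+1$. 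As the $z_k$ are simple zeros of $Q$, the first sum equals $Q''(z_k)/\bigl(2Q'(z_k)\bigr)$. To compute $Q''/Q'$ at a zero, I would substitute $x=\tfrac{z^m}{2}+\tfrac{1}{2z^m}$ into the Jacobi ODE \eqref{jacode}, using the identity $1-x^2=-\tfrac14\bigl(z^m-z^{-m}\bigr)^2$, to turn \eqref{jacode} first into a second-order linear ODE for $R(z)=P_n^{(p-1/2,q-1/2)}(x(z))$ and then, via $Q=z^{mn}R$, into one for $Q$; evaluating that ODE at a zero $z_k$ expresses $Q''(z_k)/Q'(z_k)$ as the sum of $2mn/z_k$ and an explicit rational function of $z_k$. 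Substituting this into the displayed identity, the contributions proportional to $p-q$ cancel and those proportional to $p+q$ combine to $m(p+q)/(2z_k)$, leaving exactly the right-hand side. This shows the zero set of \eqref{polydef} is the sought critical point, and by the first paragraph it is the unique maximizer of $\hat H$ on $\hat{\mcs}$.

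The conceptual part is immediate once Theorem \ref{unique} is in hand; the genuine work, and the step I expect to be the main obstacle, is the third one --- carrying out the change of variables $x=\tfrac{z^m}{2}+\tfrac{1}{2z^m}$ in \eqref{jacode} carefully enough to obtain the ODE for $Q$, and then verifying the resulting algebraic identity. Counting the zeros of $Q$ and checking they interlace the fixed charges as required by $\hat{\mcs}$ is routine, as is the concavity/compactness argument of the first paragraph.
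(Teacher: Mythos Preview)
Your proposal is correct and follows essentially the same route as the paper: reduce to finding a single critical point via Theorem~\ref{unique}, translate the critical-point condition through Lemma~\ref{equiv} into an identity involving $Q''(z_k)/Q'(z_k)$ and the partial-fraction sums $\frac{pmz_k^{m-1}}{z_k^m-1}$, $\frac{qmz_k^{m-1}}{z_k^m+1}$, and verify that identity by pushing the Jacobi ODE \eqref{jacode} through the substitution $x=\tfrac12(z^m+z^{-m})$ to obtain a second-order ODE for $Q$. The one addition you make over the paper is the explicit check that the zeros of $Q$ actually lie in $\hat{\mcs}$ (via the bijectivity of $\cos$ on each $((k-1)\pi,k\pi)$), a point the paper leaves implicit.
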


\begin{proof}
Notice that the Hamiltonian $\hat{H}$ is of the form $H$ from \eqref{genh} with $M=2mn$; $K=2m$; $\{e^{i\eta_j}\}_{j=1}^{2m}$ equal to the roots of $z^{2m}-1$; $\sigma(e^{i\theta_k})=1$ for all $k=1,\ldots,2mn$; $\sigma(e^{2\pi ij/m})=p$; and $\sigma(e^{(2j+1)i\pi/m})=q$ for all $j=1,\ldots,m$.  By Theorem \ref{unique}, it will suffice to show that the zeros of the polynomial in \eqref{polydef} form a critical point of $\hat{H}$ on $\hat{\mcs}$ because the maximum must occur at a critical point.

Let us define the polynomial $Q_{nm}(z)$ to be the polynomial given in \eqref{polydef}.
It follows that (where we abbreviate $P_n^{(\alpha,\beta)}$ by $P_n$)
\begin{align*}
Q_{nm}'(z)&=nmz^{nm-1}P_n\left(\frac{z^m}{2}+\frac{1}{2z^m}\right)+mz^{nm-1}\left(\frac{z^m}{2}-\frac{1}{2z^{m}}\right)P_n'\left(\frac{z^m}{2}+\frac{1}{2z^m}\right)\\
Q_{nm}''(z)&=nm(nm-1)z^{nm-2}P_n\left(\frac{z^m}{2}+\frac{1}{2z^m}\right)+(2nm^2-m)z^{nm-2}\left(\frac{z^m}{2}-\frac{1}{2z^{m}}\right)P_n'\left(\frac{z^m}{2}+\frac{1}{2z^m}\right)\\
&\qquad +m^2z^{mn-2}\left(\frac{z^m}{2}+\frac{1}{2z^m}\right)P_n'\left(\frac{z^m}{2}+\frac{1}{2z^m}\right)+m^2z^{mn-2}\left(\frac{z^m}{2}-\frac{1}{2z^{m}}\right)^2P_n''\left(\frac{z^m}{2}+\frac{1}{2z^m}\right)
\end{align*}
Using these calculations and the differential equation \eqref{jacode}, one can verify that $Q_{nm}$ satisfies the ODE
\[
y''+\left[mz^{m-1}\left(\frac{2\alpha+1}{z^m-1}-\frac{\alpha+\beta+2n}{z^m}+\frac{2\beta+1}{z^m+1}\right)-\frac{m-1}{z}\right]y'(z)+(mz^{m-1})^2T(z^m)y(z)=0,
\]
where
\[
T(z)=-n\left[\frac{\alpha+\beta+1}{z^2}-\frac{2(\beta-\alpha)z-(\alpha+\beta+1)(z^2+1)}{z^2(z^2-1)}\right]
\]
We see that the only poles of $T$ are at $0$, $1$ and $-1$ and hence the zeros of $Q_{nm}$ and the poles of $T(z^m)$ are disjoint sets.  We also notice that
\[
m(2\alpha+1)\frac{z^{m-1}}{z^m-1}=\sum_{j=1}^m\frac{2\alpha+1}{z-e^{2\pi ij/m}},\qquad\qquad m(2\beta+1)\frac{z^{m-1}}{z^m+1}=\sum_{j=1}^m\frac{2\beta+1}{z-e^{\pi i(2j+1)/m}}
\]
Thus, if $Q_{nm}(e^{i\theta_j^*})=0$ for $j=1,2,\ldots,2mn$, then it holds that
\begin{equation}\label{equ1}
\sum_{{k=1}\atop{k\neq j}}^{2mn}\frac{2}{e^{i\theta_j^*}-e^{i\theta_k^*}}+\sum_{j=1}^m\frac{2\alpha+1}{e^{i\theta_j^*}-e^{i\phi_j}}-\frac{m(\alpha+\beta+2n)+m-1}{e^{i\theta_j^*}}+\sum_{j=1}^m\frac{2\beta+1}{e^{i\theta_j^*}-e^{i\psi_j}}=0,
\end{equation}
where we used the identity
\[
\frac{Q''(e^{i\theta_j^*})}{Q'(e^{i\theta_j^*})}=\sum_{{k=1}\atop{k\neq j}}^{2mn}\frac{2}{e^{i\theta_j^*}-e^{i\theta_k^*}}.
\]
Now set $p=\alpha+1/2$ and $q=\beta+1/2$ and calculate the expression in \eqref{csum}.  We find
\[
\sum_{{k=1}\atop{k\neq j}}^{2mn}\frac{\sigma(e^{i\theta_k^*})}{2}+\sum_{k=1}^{m}\frac{\sigma(e^{2ki\pi/m})}{2}+\sum_{k=1}^{m}\frac{\sigma(e^{(2k+1)i\pi/m})}{2}=\frac{2mn-1+m(\alpha+\beta+1)}{2}
\]
Since this is true for every $j=1,\ldots,2mn$, Lemma \ref{equiv} and \eqref{equ1} show that the zeros of $Q_{nm}$ form a critical point of $\hat{H}$.  By Theorem \ref{unique}, this is the only critical point on $\hat{\mcs}$ and hence is the maximizing configuration.
\end{proof}

\section{$p$ and $q$ Charges Mobile}\label{mobile}

Now we will consider the full Hamiltonian $\tilde{H}$.  This Hamiltonian is of the form $H$ with $K=1$, $\eta_1=0$ and $M=2mn+2m-1$, and with the $t_j's$ denoting the arguments of all of the particles in the system other than the particle at $1$.

\begin{proof}[Proof of Theorem \ref{better41}]
By Theorem \ref{unique}, it suffices to show that the suggested configuration is a critical point of $\tilde{H}$.  Let $\{e^{i\theta_j^*}\}_{j=0}^{2mn}$ denote the zeros of $Q_{nm}$ from the previous section.  We know from Proposition \ref{better41a} that
\[
\frac{\partial}{\partial\theta_k}\tilde{H}\left(\left\{\frac{2j\pi}{m}\right\}_{j=0}^{m-1},\left\{\frac{(2j+1)\pi}{m}\right\}_{j=0}^{m-1},\{\theta_j^*\}_{j=1}^{2mn}\right)=0
\]
for all $k=1,2,\ldots,2mn$.  It remains to check that the partial derivatives with respect to each $\phi_k$ ($k=2,\ldots,m$) and $\psi_k$ ($k=1,\ldots,m$) vanish at this configuration and for this we will use Lemma \ref{equiv}.

From symmetry, we know that in this configuration, the sum of the forces on each particle of charge $p$ or $q$ is radial at that point.  Also, the magnitude of the force on all particles of charge $p$ is the same and the magnitude of the force on all particles of charge $q$ is the same.  This means that if $e^{i\phi_j}=e^{2i\pi(j-1)/m}$ and $e^{i\psi_j}=e^{i\pi(2j-1)/m}$, then for each $k=1,2,\ldots,m$ there are real constants $C$ and $C'$ so that
\begin{align*}
\sum_{j=1}^{2mn}\frac{2p}{e^{i\phi_k}-e^{i\theta_j^*}}+\sum_{j=1}^{m}\frac{2pq}{e^{i\phi_k}-e^{i\psi_j}}+\sum_{{j=1}\atop{j\neq k}}^m\frac{2p^2}{e^{i\phi_k}-e^{i\phi_j}}&=\frac{C}{e^{i\phi_k}}\\
\sum_{j=1}^{2mn}\frac{2q}{e^{i\psi_k}-e^{i\theta_j^*}}+\sum_{j=1}^m\frac{2pq}{e^{i\psi_k}-e^{i\phi_j}}+\sum_{{j=1}\atop{j\neq k}}^m\frac{2q^2}{e^{i\psi_k}-e^{i\psi_j}}&=\frac{C'}{e^{i\psi_k}}
\end{align*}
We can rewrite these expressions as
\begin{align*}
2p\frac{Q_{nm}'(e^{i\phi_k})}{Q_{nm}(e^{i\phi_k})}+2pq\frac{D_{m}'(e^{i\phi_k})}{D_{m}(e^{i\phi_k})}+p^2\frac{B_{m}''(e^{i\phi_k})}{B_{m}'(e^{i\phi_k})}&=\frac{C}{e^{i\phi_k}}\\
2q\frac{Q_{nm}'(e^{i\psi_k})}{Q_{nm}(e^{i\psi_k})}+2pq\frac{B_{m}'(e^{i\psi_k})}{B_{m}(e^{i\psi_k})}+q^2\frac{D_{m}''(e^{i\psi_k})}{D_{m}'(e^{i\psi_k})}&=\frac{C'}{e^{i\psi_k}},
\end{align*}
where $B_m(z)=z^m-1$ and $D_m(z)=z^m+1$.  The above expressions simplify to
\begin{align*}
2pnm+pqm+p^2(m-1)&=C\\
2qnm+pqm+q^2(m-1)&=C'
\end{align*}
This shows that we can write
\begin{align}
\label{phieq}\sum_{j=1}^{2mn}\frac{p}{e^{i\phi_k}-e^{i\theta_j^*}}+\sum_{j=1}^m\frac{pq}{e^{i\phi_k}-e^{i\psi_j}}+\sum_{{j=1}\atop{j\neq k}}^m\frac{p^2}{e^{i\phi_k}-e^{i\phi_j}}&=e^{-i\phi_k}\frac{2pnm+pqm+p^2(m-1)}{2}\\
\label{psieq}\sum_{j=1}^{2mn}\frac{q}{e^{i\psi_k}-e^{i\theta_j^*}}+\sum_{j=1}^m\frac{pq}{e^{i\psi_k}-e^{i\phi_j}}+\sum_{{j=1}\atop{j\neq k}}^m\frac{q^2}{e^{i\psi_k}-e^{i\psi_j}}&=e^{-i\psi_k}\frac{2qnm+pqm+q^2(m-1)}{2}
\end{align}
for all $k=1,\ldots,m$.  At $e^{i\phi_k}$, the sum of the charges on all of the other particles is $2mn+mq+(m-1)p$.  At $e^{i\psi_k}$, the sum of the charges on all of the other particles is $2mn+mp+(m-1)q$.

We can now apply Lemma \ref{equiv} to conclude that the suggested configuration is a critical point of $\tilde{H}$ on $\mcs$ (note that to reach this conclusion, we do not need to apply (\ref{phieq}) when $k=1$ because we assume $\phi_1=0$ always).  By Theorem \ref{unique}, this is the only critical point in $\mcs$ and hence must be the maximizing configuration.
\end{proof}



\begin{thebibliography}{99}














\bibitem{FR} P. Forrester and J. B. Rogers, {\em Electrostatics and the zeros of the classical polynomials}, SIAM J. Math. Anal. 17 (1986), 461--468.

\bibitem{Grinshpan} A. Grinshpan, {\em A minimum energy problem and Dirichlet spaces}, Proc. Amer. Math. Soc. 130 (2002), no. 2, 453--460.

\bibitem{Grinshpan2} A. Grinshpan, {\em Electrostatics, hyperbolic geometry and wandering vectors},  J. London Math. Soc. (2) 69 (2004), no. 1, 169--182.

\bibitem{Heine} E. Heine, {\em Handbruch der Kugelfunktionen}, vol. II, second ed. G. Reimer, Berlin, 1878.

\bibitem{Ismail} M. E. H. Ismail, {\em An electrostatics model for zeros of general orthogonal polynomials}, Pacific J. Math. 193 (2000), no. 2, 355--369.


\bibitem{MMFMG} F. Marcell\'{a}n, A. Mart\'{i}nez-Finkelshtein, and P. Mart\'{i}nez-Gonz\'{a}lez, {\em Electrostatic models for zeros of polynomials: old, new, and some open problems}, J. Comput. Appl. Math. 207 (2007), no. 2, 258--272.


\bibitem{ODEPOPUC} B. Simanek, {\em An electrostatic interpretation of the zeros of paraorthogonal polynomials on the unit circle}, SIAM J. Math. Anal. 48 (2016), no. 3, 2250--2268.

\bibitem{Convex} B. Simon, {\em Convexity: An analytic viewpoint}, Cambridge Tracts in Mathematics 187, Cambridge University Press, Cambridge, 2011.

\bibitem{Stieltjes} T. Stieltjes, {\em Sur certains polyn\^{o}mes qui v\'{e}rifient une \'{e}quation diff\'{e}rentielle lin\'{e}aire du second ordre et sur la theorie des fonctions de lam\'{e}}, Acta Math. 6 (1885), no. 1, 321--326.

\bibitem{Stieltjes1} T. Stieltjes, {\em Sur quelques th\'{e}or\`{e}ms d'alg\`{e}bre}, Comptes Rendus de l'Academie des Sciences, Paris 100 (1885), 439--440; Oeuvres Compl\`{e}tes, Vol. 1, 440--441.

\bibitem{Stieltjes2}  T. Stieltjes, {\em Sur les polyn\^{o}mes de Jacobi}, Comptes Rendus de l'Academie des Sciences, Paris, 100 (1885), 620--622; Oevres Compl\`{e}tes, Vol. 1, 442--444.

\bibitem{Szeg} G. Szeg\H{o}, {\em Orthogonal Polynomials}, Fourth Edition, American Mathematical Society, Colloquium Publications, Vol. XXIII, Providence, RI, 1975.











\end{thebibliography}
\end{document}